\newif\ifabstract
\newif\iffull\fulltrue
\renewcommand\@biblabel[1]{#1.}
\title{Funding Games: \\the Truth but not the Whole Truth} 
\author{Amotz Bar-Noy \inst{1} \and Yi Gai \inst{2} \and Matthew P. Johnson \inst{3} \and Bhaskar Krishnamachari \inst{2} \and George Rabanca \inst{1} }
\institute{Department of Computer Science, Graduate Center, City University of New York\\
\email{amotz@sci.brooklyn.cuny.edu}, \email{grabanca@gc.cuny.edu}
\and Ming Hsieh Department of Electrical Engineering, University of Southern California\\
\email{ygai@usc.edu}, \email{bkrishna@usc.edu}
\and Department of Electrical Engineering, University of California\\
\email{mpjohnson@gmail.com}
}
\date{}
\begin{document}

\maketitle

\begin{abstract}
We introduce the Funding Game, in which $m$ identical resources are to be allocated among $n$ selfish agents. Each agent requests a number of resources $x_i$ and reports a valuation $\tilde{v}_i(x_i)$, which verifiably {\em lower}-bounds $i$'s true value for receiving $x_i$ items.  The pairs $(x_i, \tilde{v}_i(x_i))$ can be thought of as size-value pairs defining a knapsack problem with capacity $m$. A publicly-known algorithm is used to solve this knapsack problem, deciding which requests to satisfy in order to maximize the social welfare.

We show that a simple mechanism based on the knapsack {\it highest ratio greedy} algorithm provides a Bayesian Price of Anarchy of 2, and for the complete information version of the game we give an algorithm that computes a Nash equilibrium strategy profile in $O(n^2 \log^2 m)$ time.  Our primary algorithmic result shows that an extension of the mechanism to $k$ rounds has a Price of Anarchy of $1 + \frac{1}{k}$, yielding a graceful tradeoff between communication complexity and the social welfare.
\end{abstract}

\section{Introduction}

Efficiently allocating resources among multiple potential recipients is a central problem in both computer science and economics.  In the mechanism design literature it is customary to use the power of currency exchange to provide incentives for the agents to be truthful.  However, it has been pointed out that assuming the existence of currency in the model is not always justified (\cite{Procaccia09}).  In the present paper we initiate the study of mechanisms with verification, first introduced by Nisan and Ronen in \cite{Nisan01} for the job scheduling problem, for resource allocation problems in a setting without currency.  This reveals an unexplored middle ground area between the settings of the multiple choice knapsack problem and that of multi-item auctions, which has some obvious practical applications.

The knapsack problem and its variations model the setting where the supplier knows precisely what value the agents are getting from any number of items.  This can be thought of as a perfect verification mechanism, and selfishness does not play a role.  At the other extreme, work in algorithmic game theory has generally considered the case where the supplier knows nothing about the agents' valuation and must provide incentives, typically by imposing payments, for the agents to be truthful.   
In this paper we introduce the Funding Game, in which a supplier distributes $m$ identical resources among $n$ agents, each of whom has a private valuation function depending only on the number of items received.
Each agent requests a number of items $x_i$, and specifies its value $\tilde{v}_i(x_i)$ for these items, which might be less that its real value $v_i(x_i)$.  The supplier can verify that the valuations are not exaggerated, and uses a publicly known algorithm to allocate the items to the agents.  The supplier's allocation algorithm has an impact on the requests made by agents, and in effect, on the instance of the allocation problem that must be solved.  Therefore we desire a mechanism that encourages agents to be relatively abstemious, or {\em not too greedy} in choosing their requests, and thereby produces an allocation yielding near-optimal social welfare.  



\subsection{Motivation}

Our model closely resembles a financing competition, where multiple contestants apply for funding provided by one supplier.  Contestants must write an application, or proposal, showing how the resources requested are going to be used to acquire the said value.  The supplier is able to verify the veracity of the proposals and disqualify any contestant that reports a higher valuation than justified.  This is the verification mechanism, and motivates our assumption that agents cannot inflate the reported value.  However, the supplier may not be able to verify that the reported value is the {\em maximum} a contestant could obtain, since it may not know the full capabilities of the contestant.

The verification mechanism can also be thought of as a set of laws or a reputation system.  If an agent obtains the requested items and does not bring the reported value, the repercussions may outweigh any immediate gains.  This understanding of the verification mechanism also justifies the assumption that agents cannot inflate their reported valuation. 

\subsection{Related work}

We show how related literature fits in our setting, categorizing it along two orthogonal dimensions: the power of the verification mechanism and communication complexity, or metaphorically, soundness and completeness.  Fig. \ref{pic:RelatedWork} classifies existing work within these dimensions.  

\begin{figure}[t!]

\center
\begin{tikzpicture}
\draw [very thin] (.3, 0)  -- (9.5, 0) ;
\draw [thin] (3.3, -1.3)--(3.3, 1.3);
\draw [thin] (6.7, -1.3)--(6.7, 1.3);

\draw [->, thick] (0, 1.7)  -- (10, 1.7) ;
\draw (1.5, 1.7) node[below]  {\footnotesize \em none}
	 (5, 1.7) node[below]  {\footnotesize \em partial}
	 (5, 1.7) node[above]  {\footnotesize verification}
	 (8.5, 1.7) node[below]  {\footnotesize \em full};
	
\draw [->, thick] (-.3, -1.5) -- (-.25, 1.5);
\draw
	(-.5, 0) node[rotate = 90] {\footnotesize revelation}  
	(0, .7) node[rotate = 90] {\footnotesize \em full }
	(0, -.7) node[rotate = 90] {\footnotesize \em partial };
	
\draw (8.5, .9) node {\footnotesize knapsack problem};
	
\draw (1.6, .9) node {\footnotesize multi-unit auctions};

\draw (1.6, -.5) node{\footnotesize bounded};
\draw (1.6, -.9) node{\footnotesize communication};

\draw (5, .9) node {\footnotesize mechanisms with};
\draw (5, .5) node {\footnotesize verification};

\draw (5, -.5) node {\footnotesize \textbf {$k$-round HRG}};
\draw (5, -.9) node {\footnotesize \textbf {PoA $1+1/k$}};

\draw (8.5, -.5) node {\footnotesize marginal greedy};
\draw (8.5, -.9) node {\footnotesize PoA $1$};

\end{tikzpicture} \vskip -.15cm
\caption{Problem settings.} \label{pic:RelatedWork} 
\end{figure}
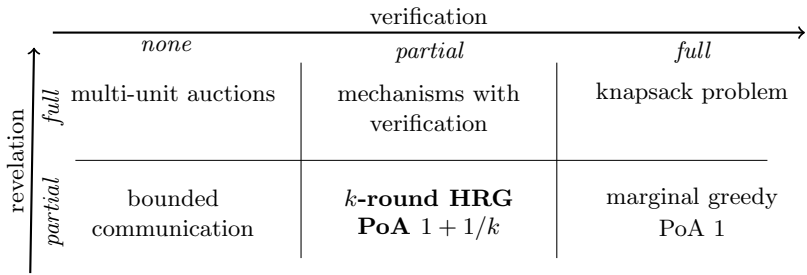
\paragraph{No verification, full revelation.}
This is the most common assumption in the algorithmic mechanism design literature.  Multi-unit auctions model the situation where a verification mechanism does not exist and thus agents must be assumed dishonest.  
Truthfulness can be achieved through VCG payments, but doing so depends on solving the allocation problem optimally, which may be intractable.  Starting with the work of Nisan and Ronen \cite{Nisan01}, the field of algorithmic mechanism design has sought to reconcile selfishness with computational complexity.
Multi-unit auctions have been studied extensively in this context, including truthful mechanisms for single-minded bidders \cite{Mu'alem02,Briest05}, and $k$-minded bidders \cite{Lavi05,Dobzinski07,Dobzinski10}.

More recently Procaccia and Tennenholtz (\cite{Procaccia09}), initiated the study of strategy proof mechanisms without money, which was followed by the adaptation of many previously studied mechanism design problems to the non-monetary setting ( \cite{Alon11}, \cite{Ashlagi10}, \cite{Chen11}, \cite{Dughmi10}, \cite{Guo10}, \cite{Lu10}). 
\paragraph{No verification, partial revelation.}
The multi item allocation problem has also been studied in the setting where dishonest agents only partially reveal their valuation functions.  The main question in this setting concerns the extent to which limiting communication complexity affects mechanism efficiency.  In \cite{Blumrosen02,Blumrosen03}, for example, bid sizes in a single-item auction are restricted to real numbers expressed by $k$ bits.  In \cite{Cohen02}, agent valuation functions are only partially revealed because full revelation would require exponential space in the number of items.

\paragraph{Partial verification, full revelation.}  
Mechanisms with verification have been introduced in \cite{Nisan01} for selfish settings of the task scheduling problem.  The authors show that truthful mechanisms exist for this problem when the mechanism can detect some of the lies, which is very natural in this setting.  More recently, this results were generalized to mechanisms that are collusion resistant (\cite{Penna08}), and to more general optimization functions (\cite{Auletta06}, \cite{Auletta09}, \cite{Ferrante09}), as well as multi parameter agents \cite{Ventre06}.  

\paragraph{Full verification, full revelation.}
If the verification mechanism has full power to ensure agents' honesty and agents must report their full valuation functions, the supplier has complete information and selfishness on the part of the recipients is irrelevant. This setting can be modeled as a multiple-choice knapsack problem solvable by FPTAS \cite{Knapsack04}.

\subsection{Contributions}

This paper extends the study of mechanisms with partial verification to multi unit resource allocation.  Unlike the problems analyzed before, there are polynomial time truthful mechanisms for multi unit auctions.  However, these mechanisms require both full revelation of the agent type, which may be hard to compute and communicate, and currency transfer, which may be impractical in some scenarios.  Our work uses the added power of verification to provide an efficient approximation mechanism for scenarios where currency transfer cannot be modeled.

We propose the highest-ratio greedy (HRG) mechanism for the Funding Game, which provides a Bayesian $PoA$ of 2 under the assumption that valuation functions give diminishing marginal returns (Theorem \ref{thm:singlepoa}).  We also provide an algorithm that computes the Nash equilibrium strategy profile in $O(n^2 \log^2 m)$ time and a best response protocol that converges to a Nash equilibrium profile.  We show that an extension of HRG to multiple rounds can arbitrarily strengthen the pure $PoA$.  In this extension, the supplier partitions the $m$ items into $k$ carefully-sized subsets, and allocates them successively over $k$ consecutive Funding Games.  We show that this mechanism has a pure $PoA$ of $1 + \frac{1}{k}$, yielding a graceful tradeoff between communication complexity and the social welfare (Theorem \ref{thm:multipoa}).   



\section{Preliminaries}

A single-round Funding Game is specified by a set of agents or {\em players} $\{1, ..., n\}$, a set of $m$ identical resources or items, and for each agent $i$ a valuation function $v_i:\{0, ..., m\} \rightarrow \mathbb{R}_{0}^+$ denoting the value $i$ derives from receiving different numbers of items.  We assume all valuation functions satisfy $v_i(0) = 0$, are nondecreasing, and exhibit diminishing marginal returns: $$v_i(x) - v_i(x-1) \geq v_i(x+1) - v_i(x)$$

A strategy or {\em request} of agent $i$ is a pair $s_i(x_i) = (x_i, \tilde{v}_i(x_i))$ specifying the number $x_i$ of items requested, and its valuation for these items.  A request is valid if $\tilde{v}_i(x) \leq v_i(x)$.

A {\it strategy profile} is an $n$-tuple of strategies $\mathbf{s}=(s_1(x_1), ..., s_n(x_n))$.  We denote by $\mathcal{S}_i$ the set of valid strategies for agent $i$, and by $\mathcal{S} = \mathcal{S}_1 \times ... \times \mathcal{S}_n$ the set of valid strategy profiles.  We denote by $X = (X_1, ..., X_n)$ an allocation of the items to the players where $X_i$ is the number of items allocated to player $i$.  Let $\mathcal{X}$ be the set of all valid allocations, i.e. all $X$ such that $\sum_{i\in[n]} X_i \leq m$.
A {\it mechanism}  $M: \mathcal{S} \rightarrow \mathcal{X}$ is an allocation algorithm that takes as input a strategy profile $\mathbf{s}$ and outputs an allocation $X$ of the items to the players.   We will denote by $X^M(\mathbf{s})=(X^M_1(\mathbf{s}), ..., X^M_n(\mathbf{s}))$ the output of mechanism $M$ for strategy profile $\mathbf{s}$.  The {\em payoff} of player $i$ with valuation $v_i$ is its valuation for the number of items it has been allocated: $u_i^M(v_i;\mathbf{s}) =  {v_i(X^M_i(\mathbf{s}))}$.   
If $\mathbf{v}=(v_1, ..., v_n)$ is a valuation function profile we denote by $OPT^\mathbf{v}$ an {\em optimal allocation}, by $sw(OPT^\mathbf{v})$ the social welfare of the optimal allocation, and by $sw^M(\mathbf{v}; \mathbf{s})=\sum_{i \in [n]} u^M_i(v_i; \mathbf{s})$ the social welfare of strategy profile $\mathbf{s}$.  We use $(s_i', \mathbf{s}_{-i})$ to denote the strategy profile $\mathbf{s}$ in which player $i^{th}$ strategy has been replaced by $s_i'$.

A strategy profile $\mathbf{s}$ is a {\it Nash equilibrium} for a Funding Game with valuation functions $\mathbf{v}$ if for any $i$ and any $s'_i \in \mathcal{S}_i$, $u_i^M(v_i; \mathbf{s}) \geq u_i^M(v_i; s'_i, \mathbf{s}_{-i})$.  The {\it Price of Anarchy} (PoA) bounds the ratio of the optimal social welfare and the social welfare of the worst Nash equilibrium in any Funding Game:
$$
PoA^M = \sup_{\mathbf{v},\text{~NE~} \mathbf{s}} \frac{sw(OPT^\mathbf{v})}{sw^M(\mathbf{v}; \mathbf{s})}
$$

In incomplete information games we assume that player $i$'s valuation function $v_i$ is drawn from a set $V_i$ of possible valuation functions, according to some distribution $D_i$.  We denote by $D = D_1 \times ... \times D_n$ the product distribution of all players' valuation functions.  A strategy $\sigma_i$ in an incomplete information game is a mapping $\sigma_i : V_i \rightarrow S_i$ from the set of the possible valuation functions to the set of valid requests.    
Assuming that the distribution $D$ is commonly known, the {\em Bayesian Nash equilibrium} is a tuple of strategies $\sigma = (\sigma_1, ..., \sigma_n)$ such that, for any player $i$, any valuation function $v_i \in V_i$ and any alternate pure strategy $s_i'$: 
$$\mathbb{E}_{v_{-i} \sim D_{-i}}[u^M_i(v_i; \sigma_i(v_i), \sigma_{-i}(\mathbf{v}_{-i})] \geq \mathbb{E}_{v_{-i} \sim D_{-i}} [u^M_i(v_i; s_i', \sigma_{-i}(\mathbf{v}_{-i})]$$
The {\em Bayesian Price of Anarchy} is defined as the ratio between the expected optimal social welfare and that of the worst bayesian Nash equilibrium: 
$$BPoA =  \sup_{D, \text{~BNE~} \sigma} \frac{\mathbb{E}_{\mathbf{v} \sim D} [sw^M(\mathbf{v}; OPT^\mathbf{v}]}{\mathbb{E}_{\mathbf{v} \sim D} [sw^M(\mathbf{v}; \sigma(\mathbf{v})]}$$

\section{Single-round games}
We first observe that the mechanism that solves the induced integer knapsack problem optimally has an unbounded $PoA$.  This can be shown by the following simple example.  Assume that $n$ items are to be allocated to $n$ players with valuation functions $v_i(x) = 1 + x * \epsilon $ for all $i$ and $x > 0$.  A Nash equilibrium of this game is when all players request all items.  The mechanism allocates all items to one player resulting in a social welfare of $1 + n\cdot \epsilon$.  The optimal allocation will allocate one item to each player for a social welfare of $n$. 

For the remainder of this section we analyze the performance of a simple greedy mechanism in a single shot game.  The {\it Highest Ratio Greedy (HRG)} mechanism grants the requests in descending order according to the ratio $v_i(x_i) / x_i$,  breaking ties in the favor of the player with lower index.  If there are not enough items available to satisfy a request completely, the request is satisfied partially.  This is exactly the greedy algorithm for the fractional knapsack problem.
In this section we show that both the pure and Bayesian $PoA$ are 2.  An interesting open problem is whether a mechanism exist for the single round game that improves this $PoA$.  We make use of the notion of smooth games (\cite{Roughgarden12}) which we review below, cast to the Funding Games studied here.  Since we are only considering the Highest Ratio Greedy mechanism we will omit the superscript $M$ from all notations in this section.
\begin{definition}[Smooth game \cite{Roughgarden12}]
	A Funding Game is $(\lambda, \mu)$-smooth with respect to a choice function $c^*: V_1 \times ... \times V_n \rightarrow \mathcal{S}$ and the social welfare objective if, for any valuation function profiles $\mathbf{v}$ and $\mathbf{w}$ and any strategy profile $\mathbf{s}$ that is valid with respect to both $\mathbf{v}$ and $\mathbf{w}$, we have: 
	$$\sum_{i=1}^n u_i(v_i; c_i^*(\mathbf{v}), \mathbf{s}_{-i}) \geq \lambda \cdot sw(\mathbf{v}; c^*(\mathbf{v})) - \mu \cdot sw(\mathbf{w}; \mathbf{s})$$
\end{definition}
The choice function can be thought of as the optimal strategy profile, in our case the strategy profile in which each player requests the number of items received in an optimal allocation, when the valuation function profile is $v$.
\begin{lemma}
Let $OPT^\mathbf{v} = (o^\mathbf{v}_1, ..., o^\mathbf{v}_n)$ be an optimal allocation for valuation profile $\mathbf{v}$ and $O: V_1 \times ... \times V_n \rightarrow \mathcal{S}$ be the optimal strategy choice function, with $O(\mathbf{v}) = ((o^\mathbf{v}_i, v_i(o^\mathbf{v}_i))_{i\in [n]})$.
The Funding Games are  $(1, 1)$-smooth with respect to $O$ and the social welfare objective.
\end{lemma}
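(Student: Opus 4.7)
The plan is to verify
\[
\sum_{i=1}^n u_i\bigl(v_i;\,O_i(\mathbf{v}),\,\mathbf{s}_{-i}\bigr)\;\geq\;sw\bigl(\mathbf{v};\,O(\mathbf{v})\bigr)\;-\;sw(\mathbf{w};\mathbf{s})
\]
for arbitrary $\mathbf{v},\mathbf{w}$ and any $\mathbf{s}$ valid against both. Write $o_i=o_i^{\mathbf{v}}$, $R_i=v_i(o_i)/o_i$ (taken as $0$ when $o_i=0$), and let $t_i$ be $i$'s allocation under HRG on $(O_i(\mathbf{v}),\mathbf{s}_{-i})$. Since $v_i$ is concave (by DMR) with $v_i(0)=0$, $x\mapsto v_i(x)/x$ is non-increasing, giving $v_i(t_i)\geq R_i t_i$, equivalently $v_i(t_i)\geq v_i(o_i)-R_i(o_i-t_i)$. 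Summing reduces the lemma to showing
\[
\sum_i R_i(o_i-t_i)\;\leq\;sw(\mathbf{w};\mathbf{s}). \qquad(\star)
\]

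To unpack $(\star)$, I would characterize $t_i$: the deviation $(o_i,v_i(o_i))$ has ratio exactly $R_i$, so HRG processes first all requests in $\mathbf{s}_{-i}$ of strictly greater ratio; if they consume $A_i$ items in total then $t_i=\min(o_i,m-A_i)$ and $o_i-t_i\leq\min(o_i,A_i)$. On the right-hand side, validity $\tilde{v}_j^{(s)}\leq w_j(x_j^{(s)})$ combined with concavity of $w_j$ gives $w_j(X_j(\mathbf{s}))\geq r_j^{(s)}X_j(\mathbf{s})$, where $r_j^{(s)}=\tilde{v}_j^{(s)}/x_j^{(s)}$, so $sw(\mathbf{w};\mathbf{s})\geq\sum_j r_j^{(s)}X_j(\mathbf{s})$ --- the fractional-knapsack LP value attained by HRG on $\mathbf{s}$. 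Thus the lemma further reduces to the combinatorial inequality $\sum_i R_i(o_i-t_i)\leq\sum_j r_j^{(s)}X_j(\mathbf{s})$.

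The main obstacle is this combinatorial inequality, since the naive per-$i$ bound $R_i(o_i-t_i)\leq R_i A_i$ over-counts each $j$ across all deviators $i$ with $R_i<r_j^{(s)}$. I would overcome it by proving the sharper threshold condition
\[
\sum_{i:\,R_i\geq\tau}(o_i-t_i)\;\leq\;\sum_{j:\,r_j^{(s)}\geq\tau}X_j(\mathbf{s})\quad\text{for every }\tau\geq 0,
\]
which by Hall's theorem (applied to the bipartite graph matching OPT-shortage units of value $R_i$ to HRG-$\mathbf{s}$ slots of ratio at least $R_i$) yields an injection whose total weight on both sides immediately gives $(\star)$. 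The threshold claim splits: when $\sum_{j:\,r_j^{(s)}\geq\tau}x_j^{(s)}\geq m$ the right side equals $m$ and trivially dominates since $\sum_i o_i\leq m$; the complementary case --- where $\mathbf{s}$'s high-ratio requests do not fill the knapsack --- is the technical crux, requiring careful accounting that exploits the structural inclusion $\{j\neq i:r_j^{(s)}>R_i\}\subseteq\{j:r_j^{(s)}\geq\tau\}\setminus\{i\}$ for $i$ with $R_i\geq\tau$, together with $\sum_i o_i\leq m$.
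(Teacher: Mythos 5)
Your reduction to $(\star)$ is sound, and your route is genuinely different from the paper's, which sidesteps the combinatorial crux you identify. The paper restricts attention to the set $A$ of players who receive fewer items under the deviation than under $OPT^{\mathbf{v}}$, observes that for each $i\in A$ the $m-t_i$ items not given to $i$ all went to requests of reported ratio at least $R_i$, so that $u_i(v_i;o_i,\mathbf{s}_{-i})+sw(\mathbf{w};\mathbf{s})\ge R_i t_i+(m-t_i)R_i=mR_i$, and then applies this to the \emph{single} player $j\in A$ maximizing $R_j$: since $\sum_{i\in A}o_i\le m$, one gets $\sum_{i\in A}v_i(o_i)=\sum_{i\in A}R_io_i\le mR_j\le u_j(v_j;o_j,\mathbf{s}_{-j})+sw(\mathbf{w};\mathbf{s})$, and the remaining nonnegative terms $u_i$ for $i\in A\setminus\{j\}$ come for free. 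No per-threshold matching is needed. Your approach instead proves the stronger statement that the shortfall is dominated ratio level by ratio level; that is a nicer structural fact, but it leaves the hardest step (the ``careful accounting'' in the non-saturated case) unfinished. For the record, it does go through: if $S_\tau:=\sum_{j:r_j^{(s)}\ge\tau}x_j^{(s)}<m$, then every blocked $i$ with $R_i\ge\tau$ has $t_i=m-A_i$ with $A_i\le S_\tau$ (the requests beating $i$'s deviation all have ratio at least $R_i\ge\tau$, ties included), hence $o_i-t_i\le o_i-(m-S_\tau)$; summing over the $p\ge1$ such players gives at most $\sum_i o_i-p(m-S_\tau)\le m-(m-S_\tau)=S_\tau$, as required, and the layer-cake integration over $\tau$ then yields $(\star)$. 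So your argument can be completed and even yields a slightly stronger inequality, but the paper's single-maximum-ratio trick is considerably shorter and avoids Hall's theorem entirely.
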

\begin{proof} We will use $o_i$ instead of either request $(o^\mathbf{v}_i, v_i(o^\mathbf{v}_i))$ or integer $o^\mathbf{v}_i$.  It will be clear from context whether $o_i$ stands for a request or an integer.  

\noindent Fix valuation function profiles $\mathbf{v}$ and $\mathbf{w}$.  For a strategy profile $\mathbf{s}$ valid with respect to both $\mathbf{v}$ and $\mathbf{w}$ we show that $\sum_{i=1}^n u_i(v_i; o_i, \mathbf{s}_{-i}) \geq sw(\mathbf{v}; O(\mathbf{v})) - sw(\mathbf{w}; \mathbf{s})$.

\noindent Let $A = \{i: u_i(v_i; o_i, \mathbf{s}_{-i}) < u_i(v_i; O(\mathbf{v}))\}$ be the set of players that are allocated more items in the optimal allocation than in profile $(o_i, \mathbf{s}_{-i})$.  It is enough to show that $\sum_{i \in A} u_i(v_i; o_i, \mathbf{s}_{-i}) + sw(\mathbf{w}; \mathbf{s}) \geq \sum_{i \in A} u_i(v_i; O(\mathbf{v}))$.

\noindent For each player $i \in A$, the value per allocated item at profile $(o_i, \mathbf{s}_{-i})$ is at least $\frac{v_i(o_i)}{o_i}$ since by definition $i$ is being allocated less than $o_i$ items, and the valuation functions are concave.  Then,
$u_i(v_i;o_i, \mathbf{s}_{-i}) \geq \frac{v_i(x_i^*)}{x_i^*} \cdot X_i(o_i, \mathbf{s}_{-i})$.
By definition, each player $i\in A$ would be allocated fewer items than $o_i$.  

\noindent Therefore the requests in $ \mathbf{s}_{-i}$ that have a better value per item ratio sum up to $m-X_i(c_i^*(\mathbf{v}), \mathbf{s}_{-i})$ items.  Since the strategy profile $\mathbf{s}$ is assumed to be valid with respect to valuation function profile $\mathbf{w}$, the valuations expressed in $\mathbf{s}$ are at most equal to the valuations $\mathbf{w}$.  We can conclude that for any $i \in A$
$$sw(\mathbf{w}; \mathbf{s}) \geq (m - X_i(o_i, \mathbf{s}_{-i})) \cdot \frac{v_i(o_i)}{o_i}$$

\noindent Then for any $i\in A$,
$u_i(v_i; o_i, \mathbf{s}_{-i}) + sw(\mathbf{w}; \mathbf{s}) \geq m \cdot \frac{v_i(o_i)}{o_i}$.
This is true in particular for player $j\in A$ with the highest value per item ratio $\frac{v_j(o_j)}{o_j}$. Therefore 
\begin{align*}
\sum_{i \in A} u_i(v_i; o_i, \mathbf{s}_{-i})  + sw(\mathbf{w}; \mathbf{s}) 	&\geq u_j(v_j; o_j, \mathbf{s}_{-j}) + sw(\mathbf{w}; \mathbf{s}) \\
									   	&\geq m \cdot \frac{v_j(o_j)}{o_j}\\
										&\geq \sum_{i \in A}u_i(v_i; O(\mathbf{v}))
\end{align*}
which completes the proof. \qed
\end{proof}

\begin{theorem}\label{thm:singlepoa}
Both the pure and Bayesian Price of Anarchy for the Funding Games are equal to 2.
\end{theorem}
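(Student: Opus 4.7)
The plan is to prove both directions of the equality. The upper bound PoA $\leq 2$, simultaneously for the pure and Bayesian cases, is an immediate consequence of the $(1,1)$-smoothness lemma just established; the matching lower bound PoA $\geq 2$ follows from a concrete family of instances.

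For the pure direction, let $\mathbf{s}$ be any Nash equilibrium under valuation profile $\mathbf{v}$. Setting $\mathbf{w}=\mathbf{v}$ in the smoothness inequality (so $\mathbf{s}$ is trivially valid with respect to both profiles) gives $\sum_i u_i(v_i; o_i^\mathbf{v}, \mathbf{s}_{-i}) \geq sw(OPT^\mathbf{v}) - sw(\mathbf{v}; \mathbf{s})$. Since $(o_i^\mathbf{v}, v_i(o_i^\mathbf{v}))$ is a valid deviation for player $i$ (the reported value equals the true value), the Nash condition yields $u_i(v_i; \mathbf{s}) \geq u_i(v_i; o_i^\mathbf{v}, \mathbf{s}_{-i})$ for every $i$; summing over $i$ and rearranging gives $2\cdot sw(\mathbf{v}; \mathbf{s}) \geq sw(OPT^\mathbf{v})$.

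For the Bayesian direction, I would invoke the standard extension of smooth games to incomplete information (cf.\ Roughgarden, Syrgkanis--Tardos): the fact that our smoothness is stated with respect to two distinct valuation profiles $\mathbf{v}$ and $\mathbf{w}$ is precisely what makes it transfer to Bayes-Nash equilibria. Concretely, for a BNE $\sigma$ I would draw two independent samples $\mathbf{v},\mathbf{w}\sim D$ and, for each player $i$, consider the mixed deviation $s'_i(v_i) = (o_i^{(\mathbf{w}_{-i}, v_i)}, v_i(o_i^{(\mathbf{w}_{-i}, v_i)}))$, in which the optimal allocation is computed for the hybrid profile obtained by combining $i$'s true value with the other players' hypothetical values drawn from $\mathbf{w}$. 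This deviation is valid with respect to $v_i$ and, after marginalizing over $\mathbf{w}_{-i}$, depends only on $v_i$. Applying the BNE inequality for each $i$, summing, taking expectations, and then invoking the two-profile smoothness pointwise yields $\mathbb{E}[sw(\sigma)] \geq \mathbb{E}[sw(OPT)] - \mathbb{E}[sw(\sigma)]$, and hence BPoA $\leq 2$.

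For the matching lower bound, I would exhibit the following two-player, $m$-item instance. Player $1$ has $v_1(x) = m$ for every $x \geq 1$ (constant, so DMR is trivial), and player $2$ has $v_2(x) = x$ (linear, marginals all equal to $1$). The profile in which player $1$ plays $(m, m)$ and player $2$ plays any valid request is a pure Nash equilibrium: player $1$'s ratio is $1$, matching the maximum ratio $v_2(1)/1 = 1$ that player $2$ can attain, and ties break to player $1$, so player $1$ is served first and takes all $m$ items (utility $m$) while player $2$ receives nothing. Player $1$ is indifferent among all deviations that still guarantee at least one item, and player $2$ cannot strictly improve because every valid request she can make has ratio at most $1$. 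The social welfare of this equilibrium is $m$, while the optimal allocation assigns $1$ item to player $1$ and the remaining $m-1$ items to player $2$ for welfare $m + (m-1) = 2m-1$. Therefore PoA $\geq (2m-1)/m$, which tends to $2$ as $m \to \infty$, and since every pure NE is a degenerate BNE, the same bound holds for BPoA. The main obstacle is the Bayesian step: whereas the pure case falls out of smoothness essentially in one line, in the Bayesian case one must couple the two independent samples correctly through the hybrid profile $(\mathbf{w}_{-i}, v_i)$, check validity for each realization of $v_i$, and aggregate the per-player BNE inequalities into the correct expected-welfare bound.
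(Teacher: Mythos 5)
Your proposal is correct and follows essentially the same route as the paper: the upper bound for both the pure and Bayesian cases comes from the $(1,1)$-smoothness lemma (the paper simply cites the extension theorem of Roughgarden for the Bayesian transfer, which you sketch in more detail), and your lower-bound instance with $v_1(x)=m$, $v_2(x)=x$ and welfare $m$ versus $2m-1$ is exactly the example the paper uses.
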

\begin{proof}  Since the Funding Games are (1, 1)-smooth with respect to an optimal allocation, the extension theorem in \cite{Roughgarden12} guarantees that the BPoA is bounded by 2.  We now show that the pure PoA is arbitrarily close to 2. 
Consider the Funding Game with $m$ items and two players with valuation functions $v_1(x)  = m$ and $v_2(x) = x$ $\forall x > 0$.  One possible Nash equilibrium strategy is for both players to request all items.  Since the value per item ratios are equal, only the first player will be allocated, for a social welfare of $m$.   The optimal solution allocates one item to the first player and $m-1$ items to the second player for a social welfare of $2m-1$.  Taking $m$ large enough leads to a PoA arbitrarily close to 2.
\qed
\end{proof}

\subsection{Complexity of computing the Nash equilibrium}

We now present an algorithm that finds the Nash equilibrium in the full information setting in $O(n^2 \log^2 m)$ time.  For each player $i$ we use binary search to find the largest request $(\alpha_i, v_i(\alpha_i))$ that passes the isSatisfiable test.  The isSatisfiable function below assures that regardless of the other players requests, there will be at least $\alpha_i$ items available when the request of player $i$ is considered by the greedy algorithm.  It is easy to see that for the resulting strategy profile each player receives exactly as many items as requested and that all items are allocated.  We need to show that if player $i$ increases its request then it will not receive more items.  By the construction of $\alpha_j$, for any player $j\neq i$, player $j$ will receive at least $\alpha_j$ items regardless of the requests of the other players.  Therefore player $i$ cannot receive more than $\alpha_i = m - \sum_{j \neq i} \alpha_j$ by changing its request.

\begin{algorithm}
\caption {isSatisfiable ($i, x_i$)}
\begin{algorithmic}

\FORALL {$j < i$ }
    \STATE $x_j \gets \max \{ x \in [m]:  \frac{v_j(x)}{x} \geq \frac{v_i(\alpha_i)}{\alpha_i}\}$
\ENDFOR

\FORALL {$j > i$ }
    \STATE $x_j \gets \max \{ x \in [m]:  \frac{v_j(x)}{x} > \frac{v_i(\alpha_i)}{\alpha_i}\}$
\ENDFOR

\RETURN \textbf{true} \textbf{if} {$\sum_{j \neq i} x_j \leq m - \alpha_i$} \textbf{else false}
\end{algorithmic}
\end{algorithm}


\section{Multiple-round games}\label{sec:multiround}

In this section we present our main algorithmic result.  We extend the Funding Game introduced in the previous section to multiple rounds, and we show that the $PoA$ of a $k$-round Funding Game is $1+\frac{1}{k}$, yielding a graceful tradeoff between mechanism complexity and the social welfare.  In a $k$-round Funding Game, the supplier partitions the $m$ items into $k$ bundles, which are distributed among the $n$ agents in $k$ successive Funding Games or {\em rounds}.  We assume that the supplier does not reveal the total number of available items $m$, nor the number of rounds $k$ a priori.  In our analysis we assume that the agents play the Nash equilibrium strategy myopically, in each individual round.  This assumption is in line with the maximin principle which states that rational agents will choose a strategy that maximizes their minimum payoff.  If agents never know whether any additional items are going to be awarded in future rounds, they will try to maximize the utility in the current round.  In the Funding Game, this is equivalent to playing the Nash equilibrium strategy.

As above, we use subscripts to indicate player index; we now use superscripts to indicate round index. Let $m^1, ..., m^k$ be the sizes of the bundles awarded in rounds $1, ..., k$ respectively, with $\sum_{t=1}^k m^t= m$.  As before, the agents have valuation functions $v_i:\{0, ..., m\} \rightarrow \mathbb{R}_0^+$, which are normalized ($v_i(0) = 0$), are nondecreasing, and exhibit diminishing marginal returns.

Let $x_i^t$ be the number of items requested by agent $i$ in game $t$ and let $X^t$ be the allocation vector for round $t$.  Let $\alpha_i^t = \sum_{j = 1, .., t} X_i^j$ be the cumulative number of items allocated to agent $i$ in the first $t$ games, with $\alpha_i^0 = 0$ for all $i$.
In round $t$, agent $i$'s valuation function $v_i^t$ is its {\em marginal valuation} given the number of items received in the earlier rounds:
$$v_i^t (x)= v_i(x + \alpha_i^{t-1}) - v_i(\alpha_i^{t-1})$$

Observe that these marginal valuations functions $v_i^t$ are normalized, are nondecreasing and have diminishing marginal returns, just like the full valuation functions $v_i$.  $G^t$ will denote the Funding Game played at round $t$ with $m^t$ items and valuation functions $v_i^t$.  Observe that these individual Funding Games agents are playing at each round depend on how items have been allocated in previous rounds, and indirectly, on players' strategies in previous rounds.

A strategy or request for agent $i$ is a $k$-tuple $s_i(x_i^1, ..., x_i^k) = (s_i^1(x_i^1), ..., s_i^k(x_i^k))$ where $s_i^t(x_i^t) = (x_i^t, v_i^t(x_i^t))$ is the request of player $i$ in game $t$.  We use $s_i$ as a shorthand to denote the strategy of player $i$ in $G$, and $s_i^t$ to denote the strategy of player $i$ in game $t$.  A {\it strategy profile} for a $k$-round Funding Game will refer to an $n$-tuple of strategies $\mathbf{s} = (s_1, ..., s_n)$ and a strategy profile for game  $G^t$ will refer to the $n$-tuple of requests of players in round $t$, $\mathbf{s}^t = (s_1^t, ..., s_n^t)$.
For a strategy profile $\mathbf{s}$, we will write $sw(\mathbf{s}) = \sum_{i=1}^n v_i(\alpha_i^{k})$ for the social welfare of $\mathbf{s}$.   Let $sw(\mathbf{s}^t)$ be the social welfare of $\mathbf{s}^t$.  Let $\Delta^t = \max_i v_i^t(1)$ be the highest marginal value for one item for any agent in round $t$. Observe that $\Delta^t$ is a nonincreasing function of $t$. 

%
%

\begin{definition}
Strategy profile $\mathbf{s}$ is a myopic equilibrium for the $k$-round Funding Game if for each $t$, $\mathbf{s}^t$ is a Nash equilibrium of round $t$. The {\it myopic Price of Anarchy} (PoA) bounds the ratio of the optimal social welfare and the social welfare of the worst myopic equilibrium in any k-round Funding Game:
$$
PoA = \sup_{\mathbf{v} \text{, myopic NE~} \mathbf{s}} \frac{sw(OPT^\mathbf{v})}{sw(\mathbf{s})}
$$
\end{definition}

Our goal is to analyze how a supplier should partition the $m$ items into bundles in order to obtain as good a $PoA$ as possible.  Theorem \ref{thm:multipoa} in this section shows how the $PoA$ relates to the choices of bundle ratios, while in the next section we find the bundle ratios that give the best $PoA$ guarantees.

\begin{lemma}\label{lem:descDelta}
For any myopic Nash equilibrium strategy profile $\mathbf{s}$ for a $k$-round game, we have $\Delta^t \geq  \frac{sw(\mathbf{s}^t)}{m^t} \geq \Delta^{t+1}$ for each $t$.
\end{lemma}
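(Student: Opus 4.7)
The plan is to prove the two inequalities separately. The first, $\Delta^t \geq sw(\mathbf{s}^t)/m^t$, follows from concavity alone; the second requires the myopic NE property of $\mathbf{s}^t$ in an essential way.

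For the first inequality, I would observe that since $v_i^t$ inherits diminishing marginal returns from $v_i$, we have $v_i^t(X_i^t) = \sum_{k=1}^{X_i^t}\bigl[v_i^t(k) - v_i^t(k-1)\bigr] \leq X_i^t \cdot v_i^t(1) \leq X_i^t \cdot \Delta^t$ for each $i$. Summing over $i$ and using $\sum_i X_i^t \leq m^t$ gives $sw(\mathbf{s}^t) \leq m^t \cdot \Delta^t$.

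For the second inequality, I may assume $\Delta^{t+1} > 0$ (else trivial) and pick $i^*$ with $v_{i^*}^{t+1}(1) = \Delta^{t+1}$. I would first argue that every item is allocated in $\mathbf{s}^t$: if not, $i^*$ could enlarge its request by one unit, receive the extra item (since the unsatisfied slack accommodates it), and strictly improve its payoff by $v_{i^*}^{t+1}(1) > 0$, contradicting the myopic NE hypothesis. Now consider the deviation in which $i^*$ reports $X_{i^*}^t + 1$ items at true value $v_{i^*}^t(X_{i^*}^t + 1)$. The resulting ratio $r' = v_{i^*}^t(X_{i^*}^t + 1)/(X_{i^*}^t + 1)$ satisfies $r' \geq v_{i^*}^{t+1}(1) = \Delta^{t+1}$ by diminishing returns. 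Because the NE forbids a strict improvement, $i^*$ cannot obtain all $X_{i^*}^t + 1$ items under this deviation, so HRG must have filled at least $m^t - X_{i^*}^t$ items from the other players' requests at ratio $\geq r'$ before reaching $i^*$'s slot.

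Combining, in the original profile $i^*$'s own request has ratio $v_{i^*}^t(X_{i^*}^t)/X_{i^*}^t \geq r'$ by concavity, so the total request mass at ratio $\geq r'$ is at least $m^t$; since HRG processes highest ratios first and all $m^t$ items are allocated, every allocated item comes from a request with ratio $\geq r' \geq \Delta^{t+1}$. For each allocated player $j$, concavity again gives $v_j^t(X_j^t)/X_j^t \geq v_j^t(x_j)/x_j \geq \Delta^{t+1}$, so $v_j^t(X_j^t) \geq X_j^t \cdot \Delta^{t+1}$, and summing over $j$ yields $sw(\mathbf{s}^t) \geq m^t \cdot \Delta^{t+1}$. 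The main obstacle is the deviation argument: translating $i^*$'s NE indifference into a lower bound on the ratio-mass of the other players' requests requires care with HRG's partial-allocation rule and tie-breaking. The small edge cases $X_{i^*}^t = 0$ and $X_{i^*}^t = m^t$ must also be checked: the single-item deviation works in the first; in the second, $sw(\mathbf{s}^t) = v_{i^*}^t(m^t) \geq m^t \cdot v_{i^*}^{t+1}(1)$ follows directly from concavity.
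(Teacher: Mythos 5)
Your proof is correct and follows essentially the same approach as the paper's: the first inequality from diminishing returns, and the second by arguing that the player realizing $\Delta^{t+1}$ could have profitably requested one more item in round $t$ unless all $m^t$ items are already allocated to requests with per-item ratio at least $\Delta^{t+1}$. The paper states this as a brief contradiction argument; your version merely spells out the HRG mechanics, tie-breaking, and edge cases in more detail.
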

\noindent{\em Proof:}
The first inequality follows from the definition of $\Delta^t$ and the diminishing returns assumption.

For the second inequality, suppose $\Delta^{t+1} > \frac{sw(\mathbf{s}^t)}{m^t}$.  This would imply that either some items are not allocated at $\mathbf{s}^t$ (impossible since $\mathbf{s}^t$ is Nash equilibrium and by assumption $\Delta^{t+1}>0$) or that some winning player $i$ has valuation-per-item ratio $\frac{v_i^t(x_i^t)}{x_i^t} <\Delta^{t+1} = v_j^{t+1}(1)$, for some player $j$. But then $j$ could have successfully requested another item in game $G^t$, meaning $\mathbf{s}^t$ is not Nash equilibrium, and so contradiction.
\qed

\begin{lemma} \label{lem:bound}
For any myopic equilibrium $\mathbf{s}$ of a $k$-round Funding Game, we have:
$$sw(OPT) \leq sw(\mathbf{s}) + \Delta^{k+1} \cdot \sum_{t=1}^k \left(m^t - \frac{sw(\mathbf{s}^t)} {\Delta^t} \right)$$
\end{lemma}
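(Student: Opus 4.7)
The plan is to bound $sw(OPT)$ by charging each OPT item to one of Nash's rounds and then invoking concavity together with the monotonicity of $\Delta^t$. For each player $i$ and round $t$, define $y_i^t := \min(X_i^t,(o_i-\alpha_i^{t-1})^+)$, the number of OPT items for $i$ that fit inside Nash's round-$t$ allocation, and $Y^t := \sum_i y_i^t$. A telescoping check shows $\sum_{t=1}^k y_i^t = \min(o_i,\alpha_i^k)$, so $\sum_t Y^t + \sum_i (o_i-\alpha_i^k)^+ = \sum_i o_i \leq m$.

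First I would decompose $v_i(o_i) = \sum_t v_i^t(y_i^t) + [v_i(o_i) - v_i(\alpha_i^k)]^+$; the residual equals $v_i^{k+1}((o_i-\alpha_i^k)^+) \leq \Delta^{k+1}(o_i-\alpha_i^k)^+$ by concavity of $v_i$, so summing over $i$ yields
\[
sw(OPT) \;\leq\; \sum_t W^t + \Delta^{k+1}\sum_i (o_i-\alpha_i^k)^+, \qquad W^t := \sum_i v_i^t(y_i^t).
\]
Next I would bound $W^t$ in two complementary ways: $W^t \leq sw(\mathbf{s}^t)$ by monotonicity (since $y_i^t \leq X_i^t$), and $W^t \leq \Delta^t Y^t$ since $v_i^t(y_i^t) \leq y_i^t\,v_i^t(1) \leq y_i^t\Delta^t$ by concavity. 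Combined with $\sum_i (o_i-\alpha_i^k)^+ \leq m - \sum_t Y^t$, this gives
\[
sw(OPT) \;\leq\; \sum_t \min\bigl(sw(\mathbf{s}^t),\,\Delta^t Y^t\bigr) + \Delta^{k+1}\bigl(m - \sum_t Y^t\bigr).
\]

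The crux is then the per-round inequality
\[
\min\bigl(sw(\mathbf{s}^t),\,\Delta^t Y^t\bigr) - \Delta^{k+1} Y^t \;\leq\; sw(\mathbf{s}^t)\bigl(1 - \Delta^{k+1}/\Delta^t\bigr),
\]
which I would establish by case analysis on whether $Y^t \leq sw(\mathbf{s}^t)/\Delta^t$, using that $\Delta^t$ is nonincreasing so $\Delta^t \geq \Delta^{k+1}$. If yes, the left side equals $(\Delta^t-\Delta^{k+1})Y^t$ and the bound reduces to $Y^t \leq sw(\mathbf{s}^t)/\Delta^t$; if no, the left side equals $sw(\mathbf{s}^t)-\Delta^{k+1}Y^t$, strictly less than the right side since $Y^t > sw(\mathbf{s}^t)/\Delta^t$. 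Summing over $t$ and adding $\Delta^{k+1} m$ to both sides collapses directly to the lemma's statement.

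The main obstacle is this case split. Neither bound on $W^t$ is tight enough on its own: using $W^t \leq \Delta^t Y^t$ everywhere fails in exactly the rounds where $Y^t$ exceeds $sw(\mathbf{s}^t)/\Delta^t$, and using $W^t \leq sw(\mathbf{s}^t)$ everywhere erases the $\Delta^t$ structure needed to aggregate the bound via Lemma~\ref{lem:descDelta}. Taking the pointwise minimum and dispatching to the two cases above is what lets the round-by-round inequalities combine cleanly into the claimed global bound.
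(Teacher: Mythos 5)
The paper states Lemma~\ref{lem:bound} without proof in the provided source, so there is no authors' argument to compare against; judged on its own, your proof is correct and complete. Each step checks out: the telescoping identity $\sum_t y_i^t=\min(o_i,\alpha_i^k)$ and the exact decomposition $v_i(o_i)=\sum_t v_i^t(y_i^t)+[v_i(o_i)-v_i(\alpha_i^k)]^+$ both follow by splitting on the first round $\tau$ with $\alpha_i^\tau\ge o_i$ (for $t<\tau$ one has $y_i^t=X_i^t$ and the increments telescope; at $t=\tau$ the charge tops off at $v_i(o_i)$; afterwards $y_i^t=0$); the two bounds $W^t\le sw(\mathbf{s}^t)$ and $W^t\le \Delta^t Y^t$ are immediate from monotonicity and diminishing returns; and the per-round interpolation $\min(sw(\mathbf{s}^t),\Delta^t Y^t)-\Delta^{k+1}Y^t\le sw(\mathbf{s}^t)(1-\Delta^{k+1}/\Delta^t)$ uses only that $\Delta^t$ is nonincreasing, which the paper notes. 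Summing over $t$ and using $\sum_t m^t=m$ yields exactly the claimed bound. Two remarks: first, your argument never invokes the hypothesis that $\mathbf{s}$ is a myopic equilibrium, so what you actually prove is the (harmlessly stronger) statement that the inequality holds for an arbitrary strategy profile --- the equilibrium property enters the overall analysis only through Lemma~\ref{lem:descDelta} and the step $\Delta^{t+1}\le sw(\mathbf{s}^t)/m^t$ inside the proof of Theorem~\ref{thm:multipoa}. Second, the degenerate case $\Delta^t=0$ (which would make $sw(\mathbf{s}^t)/\Delta^t$ undefined) is an artifact of the lemma's own statement rather than of your proof, and is dispatched by noting that then $W^t=sw(\mathbf{s}^t)=0$.
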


\begin{theorem} \label{thm:multipoa}
Let $y_t = m^t / m^1$.  The $PoA$ of the $k$-round Funding Game with bundle sizes $m^t$ is bounded by:
\begin{equation}\label{eq:poa}
1 + \sup_{x_1, ..., x_n: x_i \geq 1}  \frac{\sum_{t=1}^{k}y_t(1 - \frac{1}{x_t})}{\sum_{t=1}^k y_t \prod_{i=t+1}^k x_i }
\end{equation}
\end{theorem}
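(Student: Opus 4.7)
The plan is to rearrange the bound from Lemma \ref{lem:bound} and then introduce a natural parameterization of the per-round slack. Since $sw(\mathbf{s}) = \sum_{t=1}^k sw(\mathbf{s}^t)$, dividing the inequality of Lemma \ref{lem:bound} through by $sw(\mathbf{s})$ gives
$$\frac{sw(OPT)}{sw(\mathbf{s})} \le 1 + \frac{\Delta^{k+1}\sum_{t=1}^k \bigl(m^t - sw(\mathbf{s}^t)/\Delta^t\bigr)}{\sum_{t=1}^k sw(\mathbf{s}^t)},$$
so the whole task reduces to bounding this error term by the expression claimed.

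To do so I would set $x_t = \Delta^t m^t / sw(\mathbf{s}^t)$ for each round $t$. The first inequality of Lemma \ref{lem:descDelta} gives $x_t \ge 1$, and the second inequality $sw(\mathbf{s}^t)/m^t \ge \Delta^{t+1}$ rewrites as $\Delta^t / x_t \ge \Delta^{t+1}$, i.e., $\Delta^t \ge x_t\,\Delta^{t+1}$. Iterating this one-step recurrence down to round $k+1$ yields
$$\Delta^t \;\ge\; \Delta^{k+1}\prod_{i=t}^{k} x_i,$$
using the empty-product convention in the boundary case. This telescoping inequality is the only real algebraic content of the proof, and is where I expect to have to be a little careful.

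Next, I substitute $sw(\mathbf{s}^t) = \Delta^t m^t / x_t$ back in. The numerator of the error term collapses to $\Delta^{k+1}\sum_t m^t(1 - 1/x_t)$. For the denominator, the telescoping bound gives $sw(\mathbf{s}^t) \ge \Delta^{k+1}\,m^t\prod_{i=t+1}^{k} x_i$, so summing over $t$ lower-bounds $sw(\mathbf{s})$ from below. The common factor $\Delta^{k+1}$ cancels out of the ratio entirely, and dividing numerator and denominator by $m^1$ turns the $m^t$'s into the normalized weights $y_t = m^t/m^1$, yielding
$$\frac{sw(OPT)}{sw(\mathbf{s})} - 1 \;\le\; \frac{\sum_{t=1}^k y_t(1 - 1/x_t)}{\sum_{t=1}^k y_t \prod_{i=t+1}^k x_i}.$$

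Taking the supremum over all admissible $(x_1,\dots,x_k)$ with $x_t \ge 1$ gives the bound in the theorem. I do not expect a significant obstacle beyond the telescoping step; the rest is routine manipulation of the quantities already delivered by Lemmas \ref{lem:descDelta} and \ref{lem:bound}. The only subtlety is tracking the empty-product convention so that the $t = k$ term works out to just $y_k(1 - 1/x_k)$ in the numerator and $y_k$ in the denominator.
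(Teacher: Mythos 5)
Your proof is correct and follows essentially the same route as the paper: start from Lemma \ref{lem:bound}, introduce per-round ratios $x_t \ge 1$ via Lemma \ref{lem:descDelta}, bound the numerator by $\sum_t y_t(1-1/x_t)$ and lower-bound $sw(\mathbf{s})$ by $\Delta^{k+1} m^1 \sum_t y_t \prod_{i=t+1}^k x_i$, then take the supremum. The only (immaterial) difference is that you set $x_t = \Delta^t m^t / sw(\mathbf{s}^t)$, making the numerator an exact identity and the telescoping of the $\Delta^t$ an inequality, whereas the paper sets $x_t = \Delta^t/\Delta^{t+1}$ and places the inequality in the numerator instead; both choices are sandwiched by Lemma \ref{lem:descDelta} and yield the same bound.
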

\begin{proof}
Let $\mathbf{s}$ be a myopic equilibrium for a $k$-round game. We will show that there exist $x_1, ..., x_k$, $x_i \geq 1$, such that:
$$\frac{sw(OPT)}{sw(\mathbf{s})} \leq  \frac{\sum_{t=1}^{k}y_t(1 - \frac{1}{x_t})}{\sum_{t=1}^k y_t \prod_{i=t+1}^k x_i }  $$
From Lemma \ref{lem:bound}, we have:
$$sw(OPT) \leq sw(\mathbf{s}) + \Delta^{k+1} \cdot \sum_{t=1}^{k} \left( m^t - \frac{sw(\mathbf{s}^t)}{\Delta^t} \right)$$

\noindent Let $x_t = \frac{\Delta^{t}}{\Delta^{t+1}} $, which is at least 1 for each $t$.  Since $\mathbf{s}^t$ is a Nash equilibrium for round $t$, $\Delta^{t+1} \leq \frac{sw(\mathbf{s}^t)}{m^t} $ for each $t$.

\vspace{.1in}
\noindent Then we have:
\begin{align}
                sw(OPT) - sw(\mathbf{s})     &\leq\Delta^{k+1} \cdot \sum_{t=1}^{k}  \left( m^t - \frac{sw(\mathbf{s}^t)}{\Delta^t} \right) \nonumber \\
                                &\leq \Delta^{k+1} \cdot \sum_{t=1}^{k}  m^t \left(1 - \frac{\Delta^{t+1}}{\Delta^t} \right) \nonumber\\
                                &\leq  m^1 \Delta^{k+1} \cdot \sum_{t=1}^{k}  y_t \left( 1 - \frac{1}{x_t} \right) \label{ineq1}
\end{align}

\noindent  Observe that $\Delta^{t} = \Delta^{k+1}\prod_{i=t}^{k} x_i$.  Therefore:

\begin{align}
sw(\mathbf{s}) = \sum_{t=1}^k sw(\mathbf{s}^t)    &\geq \sum_{t=1}^k m^t \Delta^{t+1}  
                    \geq m^1 \Delta^{k+1} \cdot \sum_{t=1}^k y_t \cdot \prod_{i=t+1}^k x_i  \label{ineq2}
\end{align}

\noindent From (\ref{ineq1}) and (\ref{ineq2}) it follows that for any $k$-round game with bundle sizes $m^t$, there exist $x_1, ..., x_k$ such that:
\begin{align*}
        PoA &= 1 + \sup \frac{sw(OPT) - sw(\mathbf{s})}{sw(\mathbf{s})}    
        \leq 1+ \sup_{x_t \geq 1}\frac{\sum_{t=1}^{k}  y_t \left( 1 - \frac{1}{x_t} \right)}{\sum_{t=1}^k y_t \cdot \prod_{i=t+1}^k x_i}
\end{align*}
\qed
\end{proof}


\section{Evaluating the PoA}

In this section we present two results analyzing the expression (\ref{eq:poa}) above. 
Theorem \ref{theorem:opt1} shows that supremum of this expression taken over all valid choices of $x_t$ but  fixing $y_t=t$ is $1/k$.  This corresponds to bundle sizes $m_1, 2\cdot m_1, ..., k \cdot m_1$ for some $m_1$, 
indicating that the PoA {\em for such bundle sizes} equals $1+1/k$. 

Second, we show that the min-sup of this expression, now also taken over choices of $y_i$, which corresponds to considering all possible choices of bundle sizes, equals the same value $1/k$, indicating that there is no better partition of the items.

\begin{theorem} \label{theorem:opt1}
Let
\begin{equation*}
F(x_1, ..., x_k) = \frac{\sum\limits_{i=1}^{k}i(1 - \frac{1}{x_i})}{\sum\limits_{i=1}^k i \prod\limits_{j=i+1}^k x_j } \quad x_i \geq 1, ~i=1,...,k
\end{equation*}

Then $\sup\limits_{\mathbf{x}} F(\mathbf{x}) = \frac{1}{k}$.
\end{theorem}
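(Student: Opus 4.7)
The plan is to prove $\sup_\mathbf{x} F(\mathbf{x}) = 1/k$ by exhibiting a sequence on which $F \to 1/k$ and then establishing the matching upper bound via a weighted AM--GM inequality. For the lower bound, I would try $x_1 \to \infty$ with $x_i = i/(i-1)$ for $i \geq 2$: the key observation is that $\prod_{j=i+1}^k x_j$ does not depend on $x_1$, so sending $x_1 \to \infty$ can only help, while the choice $x_i = i/(i-1)$ makes $\prod_{j=i+1}^k x_j$ telescope to $k/i$. Consequently each term $i\prod_{j>i}x_j$ in the denominator equals $k$, so the denominator is $k^2$, the numerator tends to $1 + \sum_{i=2}^k 1 = k$, and $F \to 1/k$.

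For the upper bound, I would rewrite $F(\mathbf{x}) \leq 1/k$ as
\[
\tfrac{k^2(k+1)}{2} \;\leq\; k\sum_{i=1}^k \tfrac{i}{x_i} + \sum_{i=1}^k i\prod_{j=i+1}^k x_j.
\]
Since $k/x_1 \geq 0$, it suffices to establish this after discarding the $i=1$ term from the first sum. I would then apply AM--GM to a decomposition of the right-hand side into $\tfrac{k(k+1)}{2}$ non-negative summands: one copy of $iP_i$ for each $i=1,\ldots,k$ (writing $P_i := \prod_{j>i} x_j$), together with $i-1$ copies of $\tfrac{ki}{(i-1)x_i}$ for each $i=2,\ldots,k$. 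By construction these pieces sum exactly to the quantity we want to bound, so AM--GM reduces the claim to showing that their geometric mean is at least $k$.

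The crux is computing this geometric mean and verifying that the $\mathbf{x}$-dependence cancels. Collecting factors, $\prod_{i=1}^k iP_i = k!\prod_{j=2}^k x_j^{j-1}$, while the second group contributes $\prod_{i=2}^k x_i^{-(i-1)}$ times a constant, so the $x_j$ powers cancel exactly. The remaining constants simplify to $k^{k(k+1)/2}$ after invoking the telescoping identity $\prod_{j=1}^{k-1}(1+1/j)^j = k^{k-1}/(k-1)!$, yielding geometric mean $k$ and hence the desired bound. The main obstacle is choosing the decomposition itself: the multiplicity $i-1$ on the second-group pieces is forced by requiring that all $\tfrac{k(k+1)}{2}$ summands take the common value $k$ at the conjectured extremizer $x_i = i/(i-1)$, which is exactly the AM--GM equality condition. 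Once this split is in hand, the cancellation and arithmetic are routine.
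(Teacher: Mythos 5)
Your proof is correct, and it takes a genuinely different route from the paper's. The lower bound is the same in both (send $x_1\to\infty$, set $x_i=i/(i-1)$ so that $\prod_{j>i}x_j$ telescopes to $k/i$), but for the upper bound the paper substitutes $z_i=\prod_{j>i}x_j$ and runs a multivariable-calculus argument on $C(\mathbf{z})=\sum_i\bigl(iz_i+ik\,z_i/z_{i-1}\bigr)-\sum_i ik$: it locates the stationary point, proves its uniqueness by an induction on the signs of $f_i'$ and $h_i'$, evaluates $C$ there, and then checks the boundary of $[0,\infty)^{k-1}$. Your argument replaces all of that with a single weighted AM--GM applied to $\tfrac{k(k+1)}{2}$ non-negative summands (one copy of $iP_i$ for each $i$, and $i-1$ copies of $\tfrac{ki}{(i-1)x_i}$ for $i\ge 2$); I checked the bookkeeping: the pieces sum to the right-hand side after legitimately discarding the term $k/x_1\ge 0$, the powers of $x_j$ cancel exactly in the product ($x_j$ appears to the power $j-1$ in $\prod_i P_i$ and to the power $-(j-1)$ in the second group), and the constants collapse to $k^{k(k+1)/2}$ via $\prod_{j=1}^{k-1}(1+1/j)^j=k^{k-1}/(k-1)!$, so the geometric mean is exactly $k$ and the bound follows. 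What your approach buys is a shorter, fully elementary and airtight derivation: it sidesteps the paper's delicate steps (justifying that min over an unbounded open domain is controlled by the stationary point plus limiting boundary behavior), and the AM--GM equality condition transparently explains why $x_i=i/(i-1)$ is extremal and why the multiplicities $i-1$ are forced. What the paper's heavier machinery buys is reusability: the stationary-point formulation extends more directly to the min-sup statement over general bundle weights $y_i$ in Theorem~\ref{theorem:opt2}, whereas your decomposition is tailored to the weights $y_i=i$.
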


\noindent\begin{proof}
First observe that:
\begin{equation*}
F(\mathbf{x}) = \frac{1 - \frac{1}{x_1} + \sum\limits_{i=2}^{k}i(1 - \frac{1}{x_i})}{\sum\limits_{i=1}^k i \prod\limits_{j=i+1}^k x_j } < \lim\limits_{x_1\to\infty}  F(\mathbf{x})
\end{equation*}

\noindent If we set $x_i = \frac{i}{i-1}$, $i=2,...,k$, we have $ \lim\limits_{x_1\to\infty}  F(\mathbf{x})  =  \frac {1}{k}$.
It remains to show that $\lim\limits_{x_1\to\infty}  F(\mathbf{x}) \leq \frac{1}{k}$.
We note that the following inequalities are equivalent:
\begin{align}
\lim\limits_{x_1\to\infty}  F(\mathbf{x})  \leq \frac{1}{k} \Leftrightarrow \nonumber 
& \lim\limits_{x_1\to\infty} \left(\sum\limits_{i=1}^{k} i \prod\limits_{j=i+1}^{k}x_j - k \sum\limits_{i=1}^{k} i(1 -\frac{1}{x_i}) \right) \geq 0  \Leftrightarrow  \nonumber\\
& \lim\limits_{x_1\to\infty} \left(\sum\limits_{i=1}^{k} (i z_i + ik \cdot \frac{z_i}{z_{i-1}}) - \sum\limits_{i=1}^{k} i k\right)  \geq 0, \label{eq:transform}\\
& \text{where } z_i = \prod\limits_{j=i+1}^{k} x_j,~ i= 1,...,k-1; z_k = 1; z_0 = x_1z_1 \nonumber
\end{align}

\noindent Now define a function $C:[0, \infty)^{k-1} \rightarrow \mathbb{R}$, 
$C(\mathbf{z}) = \sum\limits_{i=1}^{k} (i z_i + ik \cdot \frac{z_i}{z_{i-1}}) - \sum\limits_{i=1}^{k} i k$.
Notice that $C$ is a function of $k-1$ variables since $z_0$ and $z_k$ are fixed.  Also notice that the domain of $C$ strictly includes the domain of $\mathbf{z}$ as defined in Eq. (\ref{eq:transform}).
To complete the proof, we show that $C(\mathbf{z}) \geq 0$ for any $\mathbf{z}\in [0, \infty)^{k-1}$.  We will do this in two steps:  (i) showing that $C(\mathbf{z})$ has a unique stationary point, and then (ii) showing that $C(\mathbf{z}) \geq 0$ at any of the domain boundaries and the stationary point.

\noindent {\bf $\mathbf{C(z)}$ has a unique stationary point.}
Let $\mathbf{a} = (a_1, ..., a_{k-1})$ be a stationary point for function $C$, and let $a_0 = z_0 = x_1z_1$ and $a_k = z_k = 1$:
\begin{equation}
    \frac{\partial C}{\partial z_i} (\mathbf{a}) =
        i + \frac{i k}{a_{i-1}} - k (i+1) \frac{a_{i+1}}{a_i^2} = 0, ~ i=1,...,k-1 \label{eq:partialDerivative}
\end{equation}

\noindent We show now by induction that each $a_i$ can be written as a function of $a_1$.
For the base case, let $a_0 = x_1 \cdot a_1 = f_0(a_1)$ and $f_1(a_1) = a_1$.

\noindent Now assume that $a_{i-1} = f_{i-1}(a_1)$ and $a_i = f_i(a_1)$.  Then we will define $a_{i+1}$ as a function of $a_1$ as follows. From Eq. (\ref{eq:partialDerivative}) we can infer:
\begin{align}
   a_{i+1} &= \left(i + \frac{ik}{a_{i-1}}\right) \cdot \frac{a_{i}^2}{k (i+1)} \nonumber \\
       a_{i+1} &= \left(i + \frac{ik}{f_{i-1}(a_1)}\right) \cdot \frac{f_{i}^2(a_1)}{k(i+1)}   \triangleq f_{i+1}(a_1) \label{eq:stationary}
\end{align}
where $f_{i+1}(\cdot)$ is the name given to the expression in Eq. (\ref{eq:stationary}) as a function of $a_1$.

\noindent Therefore the equations $a_i = f_i(a_1)$, $i=1, ..., k-1$ uniquely define a stationary point $\mathbf{a}$ with respect to $a_1$.  To show that the stationary point $\mathbf{a}$ is unique, we only need to show that $f_k(a_1) = 1$ has a unique solution.  For this it is sufficient to show that the derivative of $f_k$ with respect to $a_1$ is always positive: $f'_k(a_1) > 0$.

\noindent We show this by induction on $i=0,...,k$. Let $h_i = \frac{f_{i}}{f_{i-1}}$, $i =2, ..., k-1 $.   The inductive hypothesis is that $f'_i(a_1) > 0$, $i=1,...,k$ and $h_j(a_1) > 0$ and $h'_j(a_1) > 0$, $j=2,...,k$.

\noindent For the base case, observe the following:
\begin{align}
f_1(a_1) & =  a_1 > 0  \text{ and }  f'_1(a_1) =  1> 0 \nonumber\\
f_2(a_1) & = \frac{x_1a_1^2 + ka_1}{2kx_1}   \text{ and }   f'_2(a_1) = \frac{2x_1a_1 + k}{2kx_1} > 0 \nonumber\\
h_2(a_1) & =  \frac{f_2(a_1)}{f_1(a_1)} = \frac{x_1a_1+k}{2kx_1} > 0
\text{ and }  h'_2(a_1) = \frac{x_1}{2kx_1}> 0  \nonumber
\end{align}

\noindent Now assume that $f'_{i}(a_1) > 0$, $h_{i}(a_1)>0$, and $h'_{i}(a_1)>0$.  We then observe that $f'_{i+1}(a_1)$, $h_{i+1}(a_1)$ and $h'_{i+1}(a_1)$ are all strictly positive:
\begin{align}
    f'_{i+1}(a_1)  &= h'_i(a_1) \cdot f_i(a_1) + h_i(a_1) \cdot f'_i(a_1) >0 \nonumber\\
    h_{i+1}(a_1) &= \left( i + \frac{ik}{f_{i-1}(a_1)} \right) \cdot  \frac{f_i(a_1)}{k
    (i+1)} \nonumber\\
            & = \frac{i}{k (i+1)} f_i(a_1) + \frac{i}{i+1} \cdot h_i(a_1) >0 \nonumber\\\
    h'_{i+1}(a_1) &= \frac{i}{k (i+1)} f'_i(a_1) + \frac{i}{i+1} \cdot h'_i(a_1) >
    0 \nonumber\
\end{align}

\noindent This shows that the equation $f_k(a_1) = 1$ has a unique solution, and thus concludes step (i).

\noindent {\bf $\mathbf{C(z)} \geq 0$ at all boundary points and at the unique stationary point.}
First observe that $a_i = \frac {k}{i}$  satisfies Eq. (\ref{eq:stationary}), $i=1,...,k$ and hence $\mathbf{a}=(a_1, ..., a_{k-1})$ is the unique stationary point for $C$.  Now we show that $C(\mathbf{a})\geq0$:
\begin{align}
C(\mathbf{a})    =\sum\limits_{i=1}^{k} (i a_i + ik \cdot
\frac{a_i}{a_{i-1}}) - \sum\limits_{i=1}^{k} i k \nonumber
             =\sum\limits_{i=1}^{k} (k + k (i-1)) - \sum\limits_{i=1}^{k} i k = 0 \nonumber
\end{align}

Let $\mathbf{b} = (b_1, ..., b_{k-1})$ be a boundary point.  Then we must show that $C(\mathbf{b}) \ge 0$.   Since $\mathbf{b}$ is a boundary point there must exist $j$ such that $b_j = 0$ or $b_j = \infty$:
\begin{equation*}
C(\mathbf{b}) = \sum\limits_{i=1}^{k} (i b_i + ik \cdot \frac{b_i}{b_{i-1}}) - \sum\limits_{i=1}^{k} i k
\end{equation*}

The only negative term is $ \sum_{i=1}^{k} i k $, which is constant with respect to $\mathbf{b}$.  If $b_j = 0$ for some $j$, then the positive term $(i+1)k \cdot \frac{b_{i+1}}{b_i}$ is infinite and $C(\mathbf{b}) > 0$.  On the other hand, if $b_j = \infty$ for some $j$, then the positive term $ik \cdot \frac{b_{i}}{b_{i+1}}$ is infinite and again $C(\mathbf{b}) > 0$.
Steps (i) and (ii) above show that $C(\mathbf{z}) \geq 0$ $\forall z \in [0, \infty)^{k-1}$ and therefore $C(\mathbf{z}) \geq 0$ on the restricted domain of equation (\ref{eq:transform}), which completes the proof. \qed
\end{proof}

\begin{corollary}\label{cor:PoA}
The $PoA$ for the k-round Funding Games with bundle ratios $\frac{m_t}{m_1} = t$ is $1+\frac{1}{k}$.
\end{corollary}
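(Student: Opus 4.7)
The plan is to observe that Corollary \ref{cor:PoA} is essentially a one-line consequence of Theorems \ref{thm:multipoa} and \ref{theorem:opt1} once one matches notation carefully; the bulk of the real work has already been done in those two results. Concretely, under the hypothesis $m^t/m^1 = t$ we have $y_t = t$ for each $t=1,\ldots,k$, so the ratio appearing inside the supremum in Theorem \ref{thm:multipoa} becomes \emph{exactly} the function $F(x_1,\ldots,x_k)$ defined in Theorem \ref{theorem:opt1}:
\begin{equation*}
\frac{\sum_{t=1}^{k} y_t(1 - 1/x_t)}{\sum_{t=1}^{k} y_t \prod_{i=t+1}^{k} x_i}
\;=\;
\frac{\sum_{t=1}^{k} t(1 - 1/x_t)}{\sum_{t=1}^{k} t \prod_{i=t+1}^{k} x_i}
\;=\; F(x_1,\ldots,x_k).
\end{equation*}

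First I would apply Theorem \ref{thm:multipoa} to bound $PoA \leq 1 + \sup_{\mathbf{x}} F(\mathbf{x})$, and then immediately invoke Theorem \ref{theorem:opt1} to conclude $\sup_{\mathbf{x}} F(\mathbf{x}) = 1/k$, giving the upper bound $PoA \leq 1 + 1/k$. This is really all the proof needs to say, and it can be written in two or three lines.

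If a matching lower bound is desired (to justify stating the PoA \emph{equals} $1+1/k$ rather than being at most this), I would construct a family of instances whose myopic equilibrium welfare approaches this bound. The optimizer structure identified in the proof of Theorem \ref{theorem:opt1} is the guide: the supremum is approached as $x_1 \to \infty$ with $x_t = t/(t-1)$ for $t \geq 2$, and since $x_t = \Delta^t/\Delta^{t+1}$ this dictates that the top per-item marginal value should scale like $\Delta^t \propto 1/t$ for $t \geq 2$ while $\Delta^1$ is taken arbitrarily large. Generalizing the two-player instance used in Theorem \ref{thm:singlepoa}, one can place, in each round $t \geq 2$, a dedicated ``filler'' player whose linear valuation locks the per-item ratio at the desired $\Delta^t$, together with a single high-value short-request player forcing $\Delta^1$ to be large in round one.

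The only mildly delicate step is this lower bound construction: because the supremum in Theorem \ref{theorem:opt1} is an unattained limit (one of the $x_t$ must tend to infinity), the example has to be a parametric family rather than a single instance, and one must verify that the chosen requests really do form a myopic Nash equilibrium in every round under HRG. Given how transparent the upper bound is, however, the corollary itself should be stated and proved in just a few lines, with any tightness discussion relegated to a remark.
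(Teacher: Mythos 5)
Your proposal matches the paper's (implicit) argument exactly: the corollary is stated without proof precisely because it is the immediate combination of Theorem~\ref{thm:multipoa} with $y_t=t$ and Theorem~\ref{theorem:opt1}, which is what you do. Your additional sketch of a lower-bound family to justify the word ``is'' rather than ``is at most'' goes beyond what the paper provides (it gives no tightness construction for the multi-round case), and is a reasonable, correctly identified extra step rather than a deviation in approach.
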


\begin{theorem}\label{theorem:opt2}
Let
\begin{equation}
G(\mathbf{x}, \mathbf{y}) = \frac{\sum\limits_{i=1}^{k} y_i (1 -
\frac{1}{x_i})}{\sum\limits_{i=1}^{k} y_i
\prod\limits_{j=i+1}^{k}x_j} \quad y_i \geq 0; \;x_i \geq 1,
~i=1,...,k \nonumber
\end{equation}

Then $\min\limits_{\mathbf{y}}\sup\limits_{\mathbf{x}} G(\mathbf{x}, \mathbf{y}) = \frac 1k$.

\end{theorem}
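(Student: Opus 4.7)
The plan is to prove the equality by matching upper and lower bounds. For the upper bound $\min_{\mathbf{y}}\sup_{\mathbf{x}} G(\mathbf{x},\mathbf{y})\le 1/k$, I simply instantiate $y_i=i$, at which point $G$ reduces to $F$ as defined in Theorem~\ref{theorem:opt1}, and that theorem already gives $\sup_{\mathbf{x}} F(\mathbf{x})=1/k$. So the interesting direction is the matching lower bound: I must show that for every admissible $\mathbf{y}$ (i.e., $\mathbf{y}\ge 0$, $\mathbf{y}\neq \mathbf{0}$), one has $\sup_{\mathbf{x}} G(\mathbf{x},\mathbf{y})\ge 1/k$.

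For the lower bound, the key observation is that the very same $\mathbf{x}$-sequence used in the proof of Theorem~\ref{theorem:opt1}, namely $x_i = i/(i-1)$ for $i=2,\dots,k$ together with $x_1\to\infty$, causes $G(\mathbf{x},\mathbf{y})$ to telescope in $\mathbf{y}$. Specifically, with this choice $1-1/x_i = 1/i$ for $i\ge 2$ and $1-1/x_1 \to 1 = 1/1$, so the numerator tends to $\sum_{i=1}^k y_i/i$. At the same time $\prod_{j=i+1}^k x_j = \prod_{j=i+1}^k j/(j-1) = k/i$ for every $i=1,\dots,k$ (empty product at $i=k$ also equals $k/k=1$), so the denominator equals $k\sum_{i=1}^k y_i/i$. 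Hence the ratio tends to exactly $1/k$ for \emph{any} $\mathbf{y}$, so $\sup_{\mathbf{x}} G(\mathbf{x},\mathbf{y})\ge 1/k$.

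Putting the two bounds together yields $\min_{\mathbf{y}}\sup_{\mathbf{x}} G(\mathbf{x},\mathbf{y}) = 1/k$. There is essentially no hard step here; the main content is the telescoping computation, and the only mild subtlety is bookkeeping: confirming that the formula $\prod_{j=i+1}^k x_j = k/i$ really holds at both endpoints $i=1$ and $i=k$, handling the $x_1\to\infty$ limit as a supremum rather than an attained maximum, and excluding the degenerate case $\mathbf{y}=\mathbf{0}$ where $G$ is $0/0$ and the min-sup is naturally taken over nonzero weight vectors. All of these are routine once the telescoping identity is in hand.
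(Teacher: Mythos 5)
Your proof is correct and complete. The paper states Theorem~\ref{theorem:opt2} without including a proof, so there is nothing to compare against; your argument stands on its own. The upper bound correctly reduces to Theorem~\ref{theorem:opt1} via $y_i=i$, and the lower bound's telescoping identity $\prod_{j=i+1}^{k} j/(j-1) = k/i$ does hold at both endpoints, making the denominator equal to $k\sum_{i} y_i/i$ independently of $x_1$ and the ratio tend to exactly $1/k$ for every nonzero $\mathbf{y}\ge 0$; your handling of the limit as a supremum and the exclusion of $\mathbf{y}=\mathbf{0}$ are the right cautions.
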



\section{Discussion}

In this paper, we introduced the Funding Game, a novel formulation of resource allocation for agents whose valuation declarations can be verified, but reveal only partial information.  We analyzed the $PoA$ for the pure and Bayesian Nash equilibrium and showed that allocating the resources in multiple successive rounds can improve the pure $PoA$ arbitrarily close to 1.  There are two directions in which this work can be extended.  First, our mechanism relies on the assumption that the valuation functions are concave.  An interesting open problem is finding an efficient mechanism for general valuation functions.  Second, it might be desirable to develop efficient verification mechanisms for combinatorial settings, where players' valuation functions are defined on subsets of items.  

\renewcommand{\bibname}{References}
\bibliographystyle{splncs03}

\vspace{1cm}
\begingroup
\let\clearpage\relax

\bibliography{bibliography.bib}
\endgroup

\end{document}